\newcommand{\qp}{\mathbb{P}}
\newcommand{\op}{\mathrm{Pr}}
\newcommand{\pvm}{\mathcal{M}}
\newcommand{\abl}[2][\pvm]{\qp (#2|\psi, #1, \phi)}
\newcommand{\qabl}[1]{\qp (#1|\psi,\phi)}
\newcommand{\abs}[1]{\left\lvert{#1}\right\rvert}
\newcommand{\prj}[1]{\ket{#1}\bra{#1}}
\newcommand{\chn}{\mathcal{C}}
\newcommand{\pvmc}{\mathcal{E}}
\newcommand{\pset}{\mathcal{P}}
\newtheorem{lem}{Lemma}
\newtheorem{thm}{Theorem}
\newtheorem{prop}{Proposition}
\crefname{thm}{Theorem}{Theorems}
\Crefname{thm}{Theorem}{Theorems}
\crefname{cond}{condition}{conditions}
\DeclareMathOperator{\tr}{Tr}
\begin{document}

\title{Logical pre- and post-selection paradoxes are proofs of
  contextuality} 

\author{Matthew F. Pusey \qquad\qquad Matthew S. Leifer
  \institute{Perimeter Institute for Theoretical Physics, Waterloo ON,
    Canada\thanks{Research at Perimeter Institute is supported in part
      by the Government of Canada through NSERC and by the Province of
      Ontario through MRI\@. ML is supported by the Foundational
      Questions Institute (FQXi). Thanks to Joshua Combes, Chris
      Ferrie, Bob Griffiths, Owen Maroney and Rob Spekkens for discussions.}}  \email{\qquad
    m@physics.org \qquad\qquad\qquad matt@mattleifer.info} }

\newcommand{\titlerunning}{Logical pre- and post-selection paradoxes
  are proofs of contextuality}

\newcommand{\authorrunning}{M. F. Pusey and M. S. Leifer}

\newcommand{\event}{QPL 2015}

\maketitle

\begin{abstract}
  If a quantum system is prepared and later post-selected in certain
  states, ``paradoxical'' predictions for intermediate measurements
  can be obtained.  This is the case both when the intermediate
  measurement is strong, i.e.\ a projective measurement with
  L{\"u}ders-von Neumann update rule, or with weak measurements where
  they show up in anomalous weak values.  Leifer and Spekkens
  [Phys. Rev. Lett. \textbf{95}, 200405] identified a striking class
  of such paradoxes, known as \emph{logical} pre- and post-selection
  paradoxes, and showed that they are indirectly connected with
  contextuality. By analysing the measurement-disturbance required in
  models of these phenomena, we find that the strong measurement
  version of logical pre- and post-selection paradoxes actually
  constitute a direct manifestation of quantum contextuality. The
  proof hinges on under-appreciated features of the paradoxes.  In
  particular, we show by example that it is not possible to prove
  contextuality without L{\"u}ders-von Neumann updates for the
  intermediate measurements, nonorthogonal pre- and post-selection,
  and $0$/$1$ probabilities for the intermediate measurements.  Since
  one of us has recently shown that anomalous weak values are also a
  direct manifestation of contextuality
  [Phys. Rev. Lett. \textbf{113}, 200401], we now know that this is
  true for both realizations of logical pre- and post-selection
  paradoxes. 
\end{abstract}

\section{Introduction}

Can a ball be in two separate boxes at once, and does the answer to
this question depend in any meaningful way upon quantum mechanics?
Issues such as these have been raised by a series of colourfully
described thought experiments involving pre- and post-selected quantum
systems.

Suppose a quantum system is prepared in state $\ket{\psi}$, subjected
to an intermediate projective measurement $\pvm = \{P_j\}$ with
L{\"u}ders-von Neumann update rule\footnote{This is the traditional
  ``projection postulate'' where, upon obtaining the result $P_j$, the
  state of the system is updated to $P_j
  \ket{\psi}/\bra{\psi}P_j\ket{\psi}$.}, followed by a final
projective measurement that includes the projector onto $\ket{\phi}$
as one of its outcomes. Assuming that no other evolution occurs, the
joint probability for obtaining the outcome $P_j$ and passing the
post-selection is
\begin{equation}
  \qp(P_j,\phi|\psi,\pvm) = \abs{\Braket{\phi|P_j|\psi}}^2,
\end{equation}
and the marginal probability for passing the post-selection
is then
\begin{equation}
  \qp(\phi|\psi,\pvm) = \sum_j \qp(P_j,\phi|\psi,\pvm) = \sum_j
  \abs{\Braket{\phi|P_j|\psi}}^2. 
\end{equation}
From this, we can calculate the probabilities for the intermediate
measurement conditioned on both the pre- and post-selection as
\begin{equation}
  \label{ablrule}
  \abl{P_j} = \frac{\qp(P_j,\phi|\psi,\pvm)}{\qp(\phi|\psi,\pvm)} =
  \frac{\abs{\Braket{\phi|P_j|\psi}}^2}{\sum_k
    \abs{\Braket{\phi|P_k|\psi}}^2}, 
\end{equation}
which is known as the ``ABL rule'' \cite{abl}.

Various choices of $\ket{\psi}$, $\pvm$ and $\ket{\phi}$ have been
shown to give counter-intuitive results, for example the ``three-box
paradox'' \cite{given}, ``quantum cheshire cats'' \cite{chesire}
and recently the ``quantum pigeonhole principle'' \cite{pigeonhole}.

For example, the three-box paradox involves a state space spanned by
$\{\ket{1}, \ket{2}, \ket{3}\}$ representing a ball in box $1$, $2$,
or $3$ respectively. Consider a pre-selection $\ket{\psi} \propto
\ket{1} + \ket{2} + \ket{3}$ and a post-selection $\ket{\phi} \propto
\ket{1} + \ket{2} - \ket{3}$. If we ``look in box 1'', $\pvm = \{
\prj{1}, \prj{2}+\prj{3}\}$, then whenever the post-selection succeeds
we will have found the ball, $\abl{\prj{1}} = 1$. But if instead we
``look in box 2'', $\pvm' = \{ \prj{1}+\prj{3}, \prj{2}\}$, we also
have $\abl[\pvm']{\prj{2}} = 1$. Hence the ball is in both boxes.

Or is it? In addition to general concerns about the interpretation of
ABL probabilities for unperformed measurements (e.g. \cite{kastner}),
ontological models (without balls that are in more than one box)
reproducing various aspects of the paradox have been proposed
\cite{kirk,lsmodel,maroney} and criticised \cite{revisited}. The basic
idea of such models is that the intermediate measurement can disturb
the system, thus allowing the success of the post-selection to depend
on which measurement was performed.

We believe the central question is this: does a given pre- and
post-selection (PPS) phenomenon have a compelling classical
explanation? And we believe the best way to make this question precise
is: does the phenomena admit a non-contextual ontological model?

The most well-known obstruction to non-contextual models of quantum
theory is the Kochen-Specker theorem \cite{ks}. The question of
whether certain PPS paradoxes constitute proofs of the Kochen-Specker
theorem has been discussed, and answered in the negative
\cite{bub,bubcomment,lsctx,lsmodel}. Again the crucial issue is a
non-contextual assignment of values to the intermediate measurement
may appear contextual under post-selection, due to measurement
disturbance.

Nevertheless it was found in \cite{lsctx} that certain PPS paradoxes,
which were dubbed \emph{logical} PPS paradoxes, may be converted into
proofs of the Kochen-Specker theorem by considering a standard
``prepare and measure'' experiment (without post-selection) in which
the intermediate measurements along with two additional measurements
(based on what were the pre- and post-selection) are all considered as
counterfactual alternatives. This leaves the status of the logical PPS
paradox itself somewhat unclear.

Here we show that, by analysing the possible disturbance due to the
intermediate measurement in a non-contextual model, the paradoxes, in
their original form, are in fact proofs of contextuality in the sense
of \cite{cntx}, which generalises Kochen-Specker non-contextuality to
include preparations and Positive Operator Valued Measures
(POVMs). Whilst previous discussions have centred on the
\emph{existence} of measurement disturbance, it turns out that the
\emph{amount} of disturbance permitted by non-contextuality (whilst
non-zero) is insufficient to dissolve the paradox. Hence we will show

\begin{thm}\label{mainthm}
  Every logical PPS paradox is a proof of contextuality.
\end{thm}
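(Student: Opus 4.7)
The plan is to derive a contradiction assuming a non-contextual ontological model in the sense of \cite{cntx} for the operational statistics of the paradox. The argument has three ingredients: outcome-determinism for sharp measurements, a no-disturbance property enforced by preparation non-contextuality together with the L\"uders--von Neumann update rule, and the logical inconsistency of the simultaneously certain outcomes.

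First I would invoke the standard consequence of preparation and measurement non-contextuality that projective measurements are outcome-deterministic on the support of any pure-state distribution: for $\lambda \in \mathrm{supp}(\mu_\psi)$ the response function $\xi(P|\lambda)$ is $\{0,1\}$-valued and additive over a PVM. Combining this with the fact that the post-L\"uders state $P_i\ket{\psi}/\|P_i\ket{\psi}\|$ depends only on $P_i$ and not on the surrounding $\pvm_i$, preparation non-contextuality forces the post-measurement ontic distribution to be independent of $\pvm_i$ as well. This in turn yields a ``no-disturbance on the certain branch'' property: if $\xi(P_i|\lambda)=1$, measuring $\pvm_i$ at $\lambda$ cannot move the distribution off $\lambda$, so $\xi(\phi|\lambda)$ is preserved through the intermediate measurement.

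Next I would show that the ABL certainty $\abl[\pvm_i]{P_i}=1$ forces $\xi(P_i|\lambda)=1$ on the set $\Lambda := \{\lambda \in \mathrm{supp}(\mu_\psi) : \xi(\phi|\lambda)=1\}$. Non-orthogonality of $\ket{\psi}$ and $\ket{\phi}$ combined with outcome-determinism guarantees $\mu_\psi(\Lambda)>0$. If $\xi(P_k|\lambda)=1$ for some $k\neq i$ at some $\lambda\in\Lambda$, the no-disturbance argument applied to $P_k$ would hand a $\phi$-successful trajectory to the ``wrong'' outcome $P_k$, producing a nonzero contribution to $\qp(P_k,\phi|\psi,\pvm_i)$ and contradicting the ABL certainty. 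Ranging over all $\pvm_i$ produces a $\mu_\psi$-positive-measure subset of $\Lambda$ on which $\xi(P_i|\lambda)=1$ holds simultaneously for every $i$. Since by measurement non-contextuality $\xi(P_i|\lambda)$ is a function only of $P_i$, picking any such $\lambda$ yields a consistent $\{0,1\}$-valuation of the $P_i$, contradicting the logical inconsistency that defines the paradox.

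The main obstacle is the no-disturbance lemma: preparation non-contextuality only constrains ensemble-averaged post-measurement distributions, so individual ontic states may in principle still be shuffled around by the measurement. The argument must carefully exploit the combination of L\"uders--von Neumann updates (so the post-measurement pure state is pinned down), non-orthogonal pre- and post-selection (so $\Lambda$ is nonempty and carries $\mu_\psi$-weight), and the $0/1$ probabilities (so outcome-determinism bites sharply, allowing individual $\lambda$'s to be held responsible for the ABL certainty). The examples flagged in the abstract show that dropping any one of these ingredients allows genuine disturbance to dissolve the paradox, so the proof of the no-disturbance lemma must be tight on all three fronts.
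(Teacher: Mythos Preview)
Your overall architecture is right: find a $\mu$-positive set of ontic states inside $\Lambda^{\phi}=\{\lambda:\op(\prj{\phi}|\lambda)=1\}$ on which every $P$ with $\qabl{P}=1$ gets response $1$, and contradict the algebraic conditions. The gap is exactly where you locate it, but it is worse than a delicacy: your ``no-disturbance on the certain branch'' claim is false as stated. In Spekkens' toy bit, measuring $\{\prj{+},\prj{-}\}$ at ontic state $1$ gives outcome $\prj{+}$ with certainty, yet the post-measurement ontic state is uniform over $\{1,3\}$; the distribution \emph{does} move off $\lambda$. Preparation non-contextuality constrains only ensemble-level post-measurement distributions and, as you yourself note, gives no handle on an individual $\lambda$, so the inference from ``$\op(P|\lambda)=1$'' to ``$\op(\prj{\phi}|\lambda)$ is preserved through the measurement'' does not go through. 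Your proposal correctly flags the obstacle but does not supply the idea that removes it.

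The paper's resolution avoids ontic-state transitions altogether. It treats the sequence ``measure $\{P,I-P\}$ with L\"uders update, then measure $\{\prj{\phi},I-\prj{\phi}\}$'' as a single four-outcome POVM with elements $E_{P,\phi}=P\prj{\phi}P$, etc., and works entirely with response functions. The key technical input (\cref{idpart}) is that the L\"uders channel $\pvmc(\rho)=\sum_j P_j\rho P_j$ decomposes as $q\rho+(1-q)\chn(\rho)$ with $q>0$; hence $\pvmc^{\dagger}(\prj{\phi})=q\prj{\phi}+(1-q)\chn^{\dagger}(\prj{\phi})$, and measurement non-contextuality for convex mixtures of POVMs (\cref{mix}) gives $\op(\pvmc^{\dagger}(\prj{\phi})|\lambda)\geq q\,\op(\prj{\phi}|\lambda)$. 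Since $E_{P,\phi}+E_{\bar P,\phi}=\pvmc^{\dagger}(\prj{\phi})$ and the ABL-$1$ condition forces $\op(E_{\bar P,\phi}|\lambda)=0$ for $\mu$-almost every $\lambda$, one obtains $\op(E_{P,\phi}|\lambda)>0$ on $\Lambda^{\phi}$ (up to a $\mu$-null set), and then $E_{P,\phi}+E_{P,\bar\phi}=P$ together with outcome determinism yields $\op(P|\lambda)=1$ there. The identity-component decomposition of the L\"uders channel is the missing ingredient in your sketch; without it there is no way to bound the disturbance at a fixed $\lambda$, and the route via preparation non-contextuality and tracking ontic transitions cannot be completed.
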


\section{Logical pre- and post-selection paradoxes}

From now on, we shall only consider ABL probabilities of the form
$\abl[\{P,I-P\}]{P}$ where the intermediate measurement has two
outcomes and $I$ is the identity operator. Since the projective
measurement $\{P,I-P\}$ is thus uniquely determined by $P$, we shall
abbreviate $\abl[\{P,I-P\}]{P}$ to $\qabl{P}$ and
$\qp(\phi|\psi,\{P,I-P\})$ to $\qp(\phi|\psi)$.

Our definition of a logical PPS paradox is based on
\cite{lsctx,lsmodel}.  Consider a Hilbert space, a choice of
pre-selection $\ket{\psi}$ and post-selection $\ket{\phi}$, and a
(finite) set of projectors $\pset$ that is closed under complements,
i.e.\ if $P \in \pset$ then $I-P \in \pset$.  Suppose further that the
ABL probabilities $\qabl{P}$ are either $0$ or $1$ for every $P \in
\pset$ (which is what leads to the terminology ``logical'').

Now consider the partial boolean algebra generated by $\pset$,
i.e.\ the smallest set of projectors $\pset'$ that contains
$\pset$ and satisfies
\begin{itemize}
  \item If $P \in \pset'$ then $I - P \in \pset'$.
  \item If $P, Q \in \pset'$ and $PQ = QP$ then $PQ \in \pset'$.
\end{itemize}
If we think of projectors as representing propositions, then these
conditions ensure that we can take complements and conjunctions of
compatible propositions.\footnote{Since $P + Q - PQ = I - (I-P)(I-Q)$
  we can also take disjunctions. We thank a referee for
this simplification.}

Finally, suppose that we try to extend the probability function $f(P)
= \qabl{P}$ from $\pset$ to $\pset'$ such that the following
\emph{algebraic conditions} are satisfied\footnote{
  \cite{lsctx,lsmodel} gave an additional condition $f(P'Q') \leq f(P')$
  when $P'Q' = Q'P'$. But this follows from \cref{ac3} with $P =
  P'Q'$ and $Q = P'(I-Q')$, and then using \cref{ac1,ac2}.
  Also note that \cref{ac2,ac3} give $f(I-P) = 1 - f(P)$.}
\begin{enumerate}[label=(\roman*)]
\item For all $P \in \pset'$, $0 \leq f(P) \leq 1$. \label[cond]{ac1}
\item $f(I) = 1, f(0) = 0$. \label[cond]{ac2}
\item For all $P,Q \in \pset'$ such that $PQ = QP$, $f(P + Q - PQ) =
  f(P) + f(Q) - f(PQ)$. \label[cond]{ac3}
\end{enumerate}
If it is not possible to do this then we say that the ABL predictions
for $\pset$ form a \emph{logical PPS paradox}.

For example in the three-box paradox we have $f(\prj1) =
\qabl{\prj{1}} = 1$ and $f(\prj{2}) = \qabl{\prj{2}} = 1$. Applying
\cref{ac3} gives $f(\prj{1} + \prj{2}) = 2$ in violation of
\cref{ac1}. Other examples can be found in \cite{prodrule, prodrule2,
  peculiar, pigeonhole, cabello}.

The following simple proposition will be useful later.
\begin{prop}
  \label{orthoprop}
  In a logical PPS paradox, the pre-selection $\ket{\psi}$ and
  post-selection $\ket{\phi}$ are necessarily nonorthogonal.
\end{prop}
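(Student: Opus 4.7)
The plan is to prove the contrapositive: assuming $\Braket{\phi|\psi} = 0$, I will show that no logical PPS paradox can arise, whence any paradox must have nonorthogonal pre- and post-selection.

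The key computation is short. Since $P + (I-P) = I$, linearity gives $\Braket{\phi|\psi} = \Braket{\phi|P|\psi} + \Braket{\phi|(I-P)|\psi}$, so the orthogonality assumption forces $\Braket{\phi|(I-P)|\psi} = -\Braket{\phi|P|\psi}$, and in particular $\abs{\Braket{\phi|P|\psi}}^2 = \abs{\Braket{\phi|(I-P)|\psi}}^2$. Substituting into the two-outcome form of the ABL rule (\cref{ablrule}) yields $\qabl{P} = 1/2$ whenever the denominator is nonzero (and $\qabl{P}$ is undefined otherwise).

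To finish, I would invoke the definition of a logical PPS paradox, which demands $\qabl{P} \in \{0,1\}$ for every $P \in \pset$. Since $1/2 \notin \{0,1\}$, no $P$ with $\Braket{\phi|P|\psi} \ne 0$ can belong to $\pset$; and $P$ with $\Braket{\phi|P|\psi} = 0$ are excluded by the implicit requirement that the ABL probabilities be well-defined. Only the trivial projectors $0$ and $I$ remain, and the assignment $f(0)=0,\,f(I)=1$ extends trivially to $\pset' = \{0, I\}$ while satisfying \cref{ac1,ac2,ac3}. Thus no paradox can occur, contradicting the hypothesis.

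The argument presents no substantive obstacle; the only mild care required concerns the degenerate case $\Braket{\phi|P|\psi} = 0$, which is dispatched by the well-definedness caveat above.
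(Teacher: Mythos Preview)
Your proof is correct and rests on the same identity the paper uses, $\braket{\phi|P|\psi}+\braket{\phi|(I-P)|\psi}=\braket{\phi|\psi}$; the paper assumes $\qabl{P}=1$ and shows this forces the ABL denominator to vanish, while you assume orthogonality first and compute $\qabl{P}=1/2$ directly---same content, slightly different packaging. One small slip: by your own case split, $0$ and $I$ also have $\braket{\phi|P|\psi}=0$ under orthogonality and hence undefined ABL probability, so they do not ``remain''; but this only makes $\pset$ emptier and your conclusion easier.
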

\begin{proof}
  According to the definition of a logical PPS paradox, the ABL
  probabilities $\qabl{P}$ assigned to the projectors $P \in \pset$
  must be $0$ or $1$.  However, if $\ket{\psi}$ and $\ket{\phi}$ are
  orthogonal, then no $0$/$1$ probability assignments are possible.
  To see this, suppose that $\qabl{P} = 1$.  Then, from
  \cref{ablrule},
  \begin{equation}
    \abs{\braket{\phi|P|\psi}}^2 = \abs{\braket{\phi|P|\psi}}^2 +
    \abs{\braket{\phi|(I-P)|\psi}}^2, 
  \end{equation}
  which implies $\braket{\phi|(I-P)|\psi} = 0$.  This implies that
  \begin{align}
    \braket{\phi|P|\psi} & = \braket{\phi|P|\psi} +
    \braket{\phi|(I-P)|\psi} \\
    & = \braket{\phi|(P+I-P)|\psi} \\
    & = \braket{\phi|\psi},
  \end{align}
  which is also zero if the pre- and post-selection are orthogonal.
  This means the post-selection never occurs, so there are no ABL rule probabilities and hence no paradox.  A similar argument shows that the same is true for
  $\qabl{P} = 0$.
\end{proof}

There is an obvious similarity between logical PPS paradoxes and the
Kochen-Specker theorem.  Briefly, a Kochen-Specker noncontextual model
can be thought of as an assignment of values $v(P) \in \{0,1\}$ to
projection operators such that the algebraic conditions are satisfied
with $f(P) = v(P)$.  The Kochen-Specker theorem shows that such an
assignment is not possible in general.  However, in the Kochen-Specker
scenario, the value assignments represent the predictions of a
hypothetical outcome-deterministic ontological model (or hidden
variable theory if you prefer archaic terminology), which are supposed
to reproduce the quantum predictions in an ordinary
``prepare-and-measure'' experiment (i.e.\ with no post-selection) when
we average them over a probability measure.  In contrast, the ABL
probabilities represent the \emph{quantum} predictions with both pre-
and post-selection, and there is a possibility that the intermediate
measurements may disturb the state of the system, changing the
probability of whether the post-selection is successful.  Thus, no
direct inference from logical PPS paradoxes to Kochen-Specker
contextuality is possible.  To establish contextuality from logical
PPS paradoxes, we shall therefore have to look deeper, employing the
more general definition of non-contextuality from \cite{cntx}, which
allows us to place constraints on the amount of disturbance that can
occur in a non-contextual model.

\section{Non-contextual ontological models}

By a proof of contextuality, we mean a proof of the impossibility of a
non-contextual ontological model, as defined in \cite{cntx}. We will
need two facets of the assumption of non-contextuality:
\emph{measurement non-contextuality} and \emph{outcome determinism for
  sharp measurements}.

Briefly, a non-contextual ontological model associates a quantum
system with a measurable space $(\Lambda, \Sigma)$ where $\Lambda$ is
the set of ``ontic states'' and $\Sigma$ is a $\sigma$-algebra, a
preparation with a measure $\mu$ on $(\Lambda,\Sigma)$, and POVM
elements $E$ with conditional probabilities $\op (E|\lambda)$, such that
$\sum_{E \in \pvm} \op (E|\lambda) = 1$ for every POVM $\pvm$ and every
$\lambda \in \Lambda$. Upon marginalising over the ontic states, the
model is required reproduce the quantum probabilities:
\begin{equation}
  \int_{\Lambda} \op (E|\lambda)d \mu(\lambda) =
  \braket{\psi|E|\psi}.  
\end{equation}
where $\ket{\psi}$ is the prepared quantum state.

The assumption of measurement non-contextuality has already been made,
namely that the conditional probability of obtaining outcome $E$,
$\op(E|\lambda)$, depends only on the POVM element $E$, and not on
the other POVM elements in the POVM being measured nor on how it is
measured (e.g\ which other POVM it was obtained from by
coarse-graining).

Measurement non-contextuality has two consequences that we shall make
use of in the proof of \cref{mainthm}.  Firstly, if a POVM $\{E_j\}$
can be obtained by coarse-graining a POVM $\{E_{jk}\}$, i.e.\ $E_j =
\sum_k E_{jk}$, then
\begin{equation}
  \label{coarse}
  \op (E_j|\lambda) = \sum_k \op (E_{jk}|\lambda).
\end{equation}
This is because one method of measuring the POVM $\{E_j\}$ is to
measure the POVM $\{E_{jk}\}$ and then subsequenty marginalise over
$k$, and all methods of measuring a POVM must give the same
probabilities $\op (E_j|\lambda)$.  Secondly, for similar reasons, if
a POVM $\{E_j\}$ is a mixture of two POVMs $\{E_j'\}$ and $\{E_j''\}$,
i.e.\ $E_j = qE'_j + (1-q)E''_j$ for some $0 \leq q \leq 1$, then
\begin{equation}
  \label{mix}
  \op (E_j|\lambda) = q\op (E'_j|\lambda) + (1-q)\op (E''_j|\lambda).
\end{equation}
This is because one method of measuring $\{E_j\}$ is to flip a biased
coin with probability $q$ of coming up heads, measure $\{E_j'\}$ if
heads is obtained or $\{E_j''\}$ if tails is obtained, and then
subsequently only recording the outcome $j$.

The assumption of outcome determinism for sharp measurements is that
$\op (E|\lambda) \in \{0,1\}$ whenever $E$ is a projector. Rather than
being assumed, it can be derived from a version of non-contextuality
for preparations, together with some basic facts about projective
measurements in quantum theory. For the details of this argument see
\cite{cntx,robrant}.

It is straightforward to check that these assumptions imply that that
the assignments $f(P) = \op (P|\lambda)$ have to satisfy the
algebraic conditions. However, the possibility of measurement
disturbance blocks a direct inference from the observed pre- and
post-selected probabilities to the probabilities conditioned only on
the pre-selection of an ontic state \cite{lsmodel}. In order to prove \cref{mainthm},
we therefore need to understand the type of disturbance caused by a
projective measurement. It will turn out to be important that the
channel induced by ignoring the outcome of such a measurement can also
be implemented in a way that involves, with non-zero probability,
doing nothing.

\begin{lem}\label{idpart}
  For each projective measurement $\{P_j\}$ there exists a non-zero probability $q$ and a quantum channel (i.e. a completely-positive trace-preserving map) $\chn$ such that
  \begin{equation}
    \sum_j P_j \rho P_j = q\rho + (1-q)\chn(\rho) \qquad \forall \rho.\label{idparteq}
  \end{equation}
\end{lem}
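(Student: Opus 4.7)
The plan is to exhibit an explicit decomposition by realizing the pinching channel $\rho \mapsto \sum_j P_j \rho P_j$ as a uniform mixture of unitary conjugations, one of which is the identity. This is the standard ``phase twirl'' representation of dephasing, and the identity term in the mixture will directly supply the $q\rho$ contribution required by \cref{idparteq}.

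Concretely, let $n$ be the number of projectors in $\{P_j\}$ and set $U_l = \sum_k e^{2\pi i lk/n} P_k$ for $l = 0, 1, \ldots, n-1$. Each $U_l$ is unitary because the $P_k$ are mutually orthogonal and resolve the identity, and $U_0 = I$. A direct computation using the discrete Fourier identity $\frac{1}{n}\sum_{l=0}^{n-1} e^{2\pi i l(j-k)/n} = \delta_{jk}$ kills all cross terms in $\frac{1}{n}\sum_l U_l \rho U_l^\dagger$ and yields exactly $\sum_j P_j \rho P_j$.

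Pulling out the $l=0$ term gives $\sum_j P_j \rho P_j = \frac{1}{n}\rho + \frac{n-1}{n}\chn(\rho)$ with $\chn(\rho) := \frac{1}{n-1}\sum_{l=1}^{n-1} U_l \rho U_l^\dagger$. As a convex combination of unitary conjugations, $\chn$ is manifestly completely positive and trace-preserving, so taking $q = 1/n > 0$ proves the lemma. The only degenerate case is $n = 1$, in which necessarily $P_1 = I$ and the pinching is already the identity channel; then any $0 < q < 1$ works with any choice of $\chn$. There is no real obstacle; the only thing to verify carefully is the Fourier identity, and one should note that it is crucial that $\{P_j\}$ is a complete projective measurement (so that $U_0 = I$), rather than an arbitrary family of orthogonal projectors.
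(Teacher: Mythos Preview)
Your argument is correct and follows the same overall strategy as the paper: realize the pinching $\rho\mapsto\sum_j P_j\rho P_j$ as a uniform random-unitary channel in which one of the unitaries is the identity, then peel off that term. The difference lies in the choice of phase group. The paper uses the sign group $\{-1,+1\}^n$, defining $U_x=\sum_j x_jP_j$ for each sign string $x$ and averaging over all $2^n$ of them; since both $x=(1,\dots,1)$ and $x=(-1,\dots,-1)$ give $U_x=\pm I$ and hence identity conjugation, this yields $q=2^{1-n}$. You instead use the cyclic group $\mathbb{Z}_n$ via $n$th roots of unity, obtaining $q=1/n$. Your decomposition is more economical ($n$ unitaries rather than $2^n$) and produces a larger $q$ for $n\geq 3$; the paper's version has the minor feature that all the unitaries are real.

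One small slip: in the degenerate case $n=1$ you write that ``any $0<q<1$ works with any choice of $\chn$'', but for $q<1$ \cref{idparteq} forces $\chn$ to be the identity channel. This is harmless for the existence claim; taking $\chn$ equal to the identity (or simply $q=1$) disposes of the case.
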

\begin{proof}
  Suppose $j$ runs from $1$ to $n$. Let $X = \{1,-1\}^n$, i.e.\ the
  set of all strings $x = (x_1,x_2,\ldots,x_n)$ where $x_j = \pm
  1$. For $x \in X$ define
  \begin{equation}
    U_x = \sum_{j=1}^n x_j P_j
  \end{equation}
  which is unitary since $U_x^\dagger U_x = \sum_{j,k = 1}^n x_j x_k
  P_j^\dagger P_k = \sum_{j=1}^n {x_j}^2 P_j = \sum_{j=1}^n P_j = I$,
  where we have used that $\{P_j\}$ is a set of orthogonal projectors
  summing to the identity.

  Now consider $\sum_{x \in X} x_j x_k$. If $j=k$ then this is
  $\sum_{x \in X} 1 = 2^n$. Otherwise, the number of strings with
  $(x_j,x_k) = (1,1)$, $(x_j,x_k) = (-1,-1)$, $(x_j,x_k) = (1,-1)$,
  and $(x_j,x_k) = (-1,1)$ are all equal, with the first two sets
  contributing $1$ to the sum and the second to contributing
  $-1$. Hence $\sum_{x \in X} x_j x_k = 2^n \delta_{jk}$, and so
  \begin{equation}
    \frac{1}{2^n}\sum_{x \in X} U_x \rho U_x^\dagger = \frac{1}{2^n}
    \sum_{j,k = 1}^n \sum_{x \in X} x_j x_k P_j \rho P_k = \sum_{j,k =
      1}^n \delta_{jk} P_j \rho P_k \\
    = \sum_j P_j \rho P_j. 
  \end{equation}.

  Since $U_{\pm(1,\dotsc,1)} = \pm I$ we have \cref{idparteq} with $q
  = 2^{1-n}$ and $\chn(\rho) \propto \sum_{x \neq \pm(1,\dotsc,1)} U_x
  \rho U_x^\dagger$.
\end{proof}

Let us see the implication of this for the disturbance.

\begin{lem}
  \label{mdlem}
  Let $\{E_k\}$ be a POVM, let $\{P_j\}$ be a projective measurement,
  and let $\pvmc$ be the channel $\pvmc(\rho) = \sum_j P_j \rho P_j$,
  corresponding to performing the measurement $\{P_j\}$ and not
  recording the outcome.  In a measurement non-contextual model, if
  $\lambda$ makes some outcome of $\{E_k\}$ possible, i.e. $\op
  (E_k|\lambda) > 0$ for some $k$, then $\op
  (\pvmc^{\dagger}(E_k)|\lambda) > 0$, where $\pvmc^{\dagger}$ is the
  adjoint channel to $\pvmc$, i.e.\ $\lambda$ also makes the $k$th
  outcome possible in the measurement procedure consisting of
  performing $\{P_j\}$ and not recording the outcome, followed by
  performing $\{E_k\}$.
\end{lem}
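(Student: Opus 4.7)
The plan is to leverage \cref{idpart} to write the POVM $\{\pvmc^\dagger(E_k)\}$ as a convex mixture that includes the original POVM $\{E_k\}$ with positive weight, and then invoke the mixture property \cref{mix} that measurement non-contextuality imposes on ontic response functions.

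First I would apply \cref{idpart} to the projective measurement $\{P_j\}$, obtaining a non-zero $q$ and a channel $\chn$ with $\pvmc(\rho) = q\rho + (1-q)\chn(\rho)$ for all $\rho$. Taking adjoints (both $\pvmc$ and $\chn$ are trace-preserving, so their adjoints are unital) yields
\begin{equation}
  \pvmc^\dagger(E_k) = q E_k + (1-q)\chn^\dagger(E_k)
\end{equation}
for each outcome $E_k$. Since $\pvmc^\dagger$ is unital, $\{\pvmc^\dagger(E_k)\}_k$ is itself a valid POVM; moreover $\{\chn^\dagger(E_k)\}_k$ is a POVM for the same reason. Thus $\{\pvmc^\dagger(E_k)\}_k$ is literally the mixture, with weight $q$, of the POVMs $\{E_k\}_k$ and $\{\chn^\dagger(E_k)\}_k$.

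Next I would apply \cref{mix} to conclude that in any measurement non-contextual model,
\begin{equation}
  \op(\pvmc^\dagger(E_k)|\lambda) = q\,\op(E_k|\lambda) + (1-q)\,\op(\chn^\dagger(E_k)|\lambda).
\end{equation}
Because $q > 0$, $\op(E_k|\lambda) > 0$ by hypothesis, and $\op(\chn^\dagger(E_k)|\lambda) \geq 0$ as a probability, the right-hand side is strictly positive, giving the claim.

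I do not anticipate a genuine obstacle: the whole argument is a one-line consequence once \cref{idpart} is in hand, because it reduces the question of "did performing $\{P_j\}$ destroy the ability to see outcome $E_k$?" to the observation that a non-contextual response function to a mixture is the corresponding mixture of response functions, and one component of the mixture is literally the undisturbed measurement. The only subtlety worth flagging is the use of unitality of $\pvmc^\dagger$ to ensure $\{\pvmc^\dagger(E_k)\}$ is a POVM (so that \cref{mix} applies); this follows from trace-preservation of $\pvmc$ and $\chn$.
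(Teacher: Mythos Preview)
Your proposal is correct and follows essentially the same route as the paper: use \cref{idpart} to write $\pvmc^\dagger(E_k) = qE_k + (1-q)\chn^\dagger(E_k)$, then invoke \cref{mix} to get $\op(\pvmc^\dagger(E_k)|\lambda) = q\op(E_k|\lambda) + (1-q)\op(\chn^\dagger(E_k)|\lambda) \geq q\op(E_k|\lambda) > 0$. Your explicit remark about unitality of $\pvmc^\dagger$ and $\chn^\dagger$ guaranteeing that $\{\pvmc^\dagger(E_k)\}$ and $\{\chn^\dagger(E_k)\}$ are POVMs is a nice touch that the paper leaves implicit.
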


\begin{proof} 
  The effect of performing the measurement $\{P_j\}$ and not recording
  the outcome is given by the channel
  \begin{equation}
    \pvmc(\rho) = \sum_j P_j \rho P_j.
  \end{equation}
  If we apply this channel to a state $\rho$, then measure
  $\{E_k\}$, the probabilities are given by $\tr(E_k\pvmc(\rho)) =
  \tr(\pvmc^\dagger(E_k)\rho)$ where $\pvmc^\dagger$ is the adjoint
  channel to $\pvmc$ (for our channel $\pvmc^\dagger = \pvmc$ but it
  will useful to keep the conceptual distinction). So we can consider
  the overall procedure as a measurement of the POVM $\{
  \pvmc^\dagger(E_k) \}$.

  Now consider another procedure. With probability $q$ we simply
  measure $\{E_k\}$, whereas with probability $1-q$ we measure
  $\{\chn^\dagger(E_k) \}$, where $q > 0$ and $\chn$ are from
  \cref{idpart}. By \cref{mix}, in the ontological model this will
  correspond to $q \op (E_k|\lambda) + (1-q) \op
  \left(\chn^\dagger(E_k)|\lambda\right)$.

  But by \cref{idpart} we have $\pvmc^\dagger(E_k) = qE_k +
  (1-q)\chn^\dagger(E_k)$, and so these two procedures correspond to
  the same POVM\@. By measurement non-contextuality, we therefore have
  \begin{equation}
    \op \left( \pvmc^\dagger(E_k)|\lambda \right) = q \op (E_k|\lambda) +
    (1-q) \op \left(\chn^\dagger(E_k)|\lambda\right) \geq q
    \op (E_k|\lambda), 
  \end{equation}
  so that $\op (E_k|\lambda) > 0$ implies $\op \left(
    \pvmc^\dagger(E_k) |\lambda \right) > 0$.
\end{proof}

In other words, the measurement-disturbance of a projective
measurement cannot make an outcome of a following measurement go from
being possible to impossible.\footnote{It is worth noting that the proof features an additional measurement $\chn^\dagger(E_k)$, which does not appear in definition of the paradox. But this is merely a device for getting an operational handle on the measurement-disturbance, and no particular facts about $\chn^\dagger(E_k)$ or its representation in the ontological model (other than it's non-negativity) are used.} \Cref{mainthm} follows simply by showing
that this is exactly the type of disturbance needed to dissolve a
logical pre- and post-selection paradox.

In the case of a finite state space $\Lambda$, the proof would run as
follows. By \cref{orthoprop}, when there is no intervening measurement,
the post-selection can occur. Hence there exists some $\lambda$
compatible with the preparation that makes the post-selection occur.
Consider $P$ with $\qabl{P} = 1$. If $\op(P|\lambda) = 0$ then
measurement-disturbance must always prevent the post-selection from
occurring, in contradiction with \cref{mdlem}. Hence outcome
determinism for sharp measurements gives $\op(P|\lambda) = 1$.
Repeating this for all other $P$ with $\qabl{P} = 1$, we find that
$\op(P|\lambda) = \qabl{P}$ for every $P \in \pset$. But since
$\op(P|\lambda)$ must satisfy the algebraic conditions, we have a
contradiction. We now present a formal proof that applies to an
arbitrary space of ontic states.

\begin{proof}[Proof of \cref{mainthm}] 
  The proof works by showing that in order to reproduce the ABL
  probabilities, there must exist ontic states $\lambda$ such that
  $\op (P|\lambda) = \qabl{P}$ for every $P \in \pset$.  Since $\op
  (P|\lambda)$ must satisfy the algebraic conditions on $\pset'$, and
  there is no extension of $\qabl{P}$ that does so, this is a
  contradiction, so no measurement noncontextual model is possible.

  It suffices to prove this for those $P \in \pset$ such that
  $\qabl{P} = 1$, since if $\qabl{P} = 0$ then $\qabl{I-P} = 1$ and,
  in the ontological model, we necessarily have $\op (P|\lambda) + \op
  (I-P|\lambda) = 1$.

  We start by reproducing the reasoning that led to \cref{ablrule} at
  the ontological level.  For concreteness, suppose that the
  post-selection works by making the projective measurement
  $\{\prj{\phi},I - \prj{\phi}\}$ and selecting the cases where the
  $\prj{\phi}$ outcome is obtained.  A projective measurement
  $\{P,I-P\}$ with L{\"u}ders-von Neumann update followed by a
  measurement of $\{\prj{\phi},I - \prj{\phi}\}$ is a method of
  measuring the POVM
  $\{E_{P,\phi},E_{P,\bar{\phi}},E_{\bar{P},\phi},E_{\bar{P},\bar{\phi}}\}$,
  where
  \begin{align}
    E_{P,\phi} & = P \prj{\phi} P & E_{P,\bar{\phi}} & = P (I -
    \prj{\phi})P \\
    E_{\bar{P},\phi} & = (I-P) \prj{\phi} (I-P) &
    E_{\bar{P},\bar{\phi}} & = (I-P) (I - \prj{\phi}) (I-P),
  \end{align}
  so we can calculate the joint probabilities in the ontological model
  as
  \begin{equation}
    \label{ontjoint}
    \qp (P,\phi|\psi) = \int_{\Lambda} \op (E_{P,\phi}|\lambda)d\mu(\lambda),
  \end{equation}
  and the marginal for passing the post-selection as
  \begin{equation}
    \label{ontmarj}
    \qp(\phi|\psi) = \int_{\Lambda}
    \op (E_{P,\phi}|\lambda)d\mu(\lambda) + \int_{\Lambda}
    \op(E_{\bar{P},\phi}|\lambda)d\mu(\lambda), 
  \end{equation}
  and so the probability for
  $P$ conditional on both pre- and post-selection is
  \begin{equation}
    \label{ontabl}
    \qabl{P} =
    \frac{\qp (P,\phi|\psi)}{\qp(\phi|\psi)} =
    \frac{\int_{\Lambda} \op (E_{P,\phi}|\lambda)d\mu(\lambda)}{\int_{\Lambda}
      \left(\op(E_{P,\phi}|\lambda) + \op(E_{\bar{P},\phi}|\lambda)\right)d\mu(\lambda)}.
  \end{equation}

  Now, if $\qabl{P} = 1$ then \cref{ontabl} implies
  \begin{equation}
    \label{ontint}
    \op (E_{P,\phi}|\lambda) = \op(E_{P,\phi}|\lambda) + \op(E_{\bar{P},\phi}|\lambda),
  \end{equation}
  on a set $\Omega_P$ such that $\mu(\Omega_P) = 1$. Let $\Lambda^{\phi} = \{\lambda \in \Lambda | \op
  (\prj{\phi}|\lambda) = 1\}$. We proceed by coarse-graining $\{E_{P,\phi},E_{P,\bar{\phi}},E_{\bar{P},\phi},E_{\bar{P},\bar{\phi}}\}$ in two different ways, applying \cref{coarse} each time.

  Firstly, since $E_{P,\phi} + E_{\bar{P},\phi} = \pvmc^\dagger(\prj{\phi})$, the RHS of \cref{ontint} equals $\op(\pvmc^\dagger(\prj{\phi})|\lambda)$, and therefore so does the LHS, $\op(E_{P,\phi}|\lambda)$. Hence given that $\op (\prj{\phi}|\lambda) = 1$ on $\Omega_P \cap
  \Lambda^{\phi}$, by \cref{mdlem}
  $\op(E_{P,\phi}|\lambda) = \op (\pvmc^{\dagger}(\prj{\phi})|\lambda) > 0$ on this set also.

  Secondly, $E_{P,\phi} + E_{P,\bar{\phi}} = P$ gives $\op (E_{P,\phi}|\lambda) +
  \op (E_{P,\bar{\phi}}|\lambda) = \op (P|\lambda)$ and thus $\op (P|\lambda) \geq
  \op (E_{P,\phi}|\lambda) > 0$ on $\Omega_P \cap \Lambda^{\phi}$.  By
  outcome determinism for sharp measurements, in fact $\op (P|\lambda)
  = 1$ on $\Omega_P \cap \Lambda^{\phi}$.

  Repeating this argument for every $P \in \pset$ such that $\qabl{P}
  = 1$, we have that, for every such $P$, there exists a set $\Omega_P
  \subseteq \Lambda$ such that $\mu(\Omega_P) = 1$ and $\op (P|\lambda) = 1$ on $\Omega_P \cap
  \Lambda^{\phi}$.

  Finally, notice that outside $\Lambda^\phi$, outcome determinism for sharp
  measurements gives $\op (\prj{\phi}|\lambda) = 0$.  By \cref{orthoprop}, $\ket{\psi}$ and
  $\ket{\phi}$ are nonorthogonal, which means that
  $\mu(\Lambda^{\phi}) > 0$:
  \begin{equation}
    0 < \abs{\braket{\phi|\psi}}^2 = \int_{\Lambda} \op (\prj{\phi}|\lambda) d\mu(\lambda) = \int_{\Lambda^{\phi}} d\mu(\lambda) = \mu(\Lambda^{\phi}).
  \end{equation}
  
  Now, $\Omega = \cap_{P \in \pset} \Omega_P$ is
  also measure one according to $\mu$ because it is the intersection
  of a finite number of measure one sets.  Thus $\mu(\Omega \cap
  \Lambda^{\phi}) = \mu(\Lambda^{\phi}) > 0$, in particular $\Omega \cap \Lambda^\phi$ is nonempty. Recall that on this set $\op (P|\lambda) = 1$ for $P \in \pset$ such that $\qabl{P} = 1$.  Since these
  assignments cannot be extended to $\pset'$ without violating the
  algebraic conditions, this contradicts the assumption that the model
  is non-contextual.
\end{proof}

\section{The connection to weak values}

Similar conclusions can be reached for an
alternative version of PPS paradoxes based on weak measurements.
Without going into details, an observable $A$ can be measured by
coupling the system to a continuous variable pointer system via a
Hamiltonian $H = g A \otimes p$, where $g$ is the coupling constant,
$A$ is the observable to be measured, and $p$ is the momentum of the
pointer.  If the parameters are chosen such that $gt \ll \Delta x$,
where $t$ is the duration of the measurement interaction and $\Delta
x$ is the initial position uncertainty of the pointer, then this is
called a ``weak measurement''.  

If the system is pre- and post-selected, with a weak measurement in
the middle, then, to first order in $gt$, the position distribution of
a suitably prepared pointer simply shifts by an amount
$gt w(A|\psi,\phi)$, where
\begin{equation}
  w(A|\psi,\phi) = \mathrm{Re} \left (
    \frac{\braket{\phi|A|\psi}}{\braket{\phi|\psi}} \right ),
\end{equation}
and $w(A|\psi,\phi)$ is called the \emph{weak value} of $A$.  Weak
values can lie outside the eigenvalue range of the operator $A$, in
which case they are called \emph{anomalous} weak values.

It is easy to check that the weak values assigned to a partial boolean
algebra of projection operators always satisfy the algebraic
\cref{ac2,ac3} with $f(P) = w(P|\psi,\phi)$.  This
is because, unlike the ABL probabilities, the denominator of
$w(P|\psi,\phi)$ does not depend on which projector we are measuring.
However, anomalous weak values mean that \cref{ac1} is sometimes
violated, i.e.\ $w(P|\psi,\phi)$ can be negative or greater than $1$.

It is also easy to verify that if $\qabl{P}$ is $0$ or $1$ then
$w(P|\psi,\phi) = \qabl{P}$ \cite{given}.  This means that, whenever there is a
logical PPS paradox for $\pset$, there is some projector in the
partial boolean algebra $\pset'$ that has an anomalous weak value.
This is because, by definition, there is no extension of the ABL
probabilities to $\pset'$ that satisfies all of the algebraic
conditions, and \cref{ac1} is the only one that can be violated by
weak values.  Therefore, logical PPS paradoxes will always show up as
anomalous weak values in the weak measurement version of the
experiment.

For example, in the three-box paradox, we have $w(\prj{1} +
\prj{2}|\psi,\phi) = 2$ and $w(\prj{3}|\psi,\phi) = -1$.

Because weak measurements do not disturb the state of the system to
first order in $gt$, strange behaviour of weak values is often thought
to be more puzzling than strange behaviour of ABL probabilities.
However, weak values should be interpreted with caution because they
are not probabilities, but rather small shifts in the distribution of
pointer position.  Nonetheless, it has recently been shown
\cite{weakctx} that anomalous weak values are proofs of contextuality,
Combined with our results above, this means 
logical PPS paradoxes are proofs of contextuality in both their strong
and weak measurement versions.

\section{Important features of the paradoxes}

\Cref{mainthm} establishes that logical PPS paradoxes are proofs of
contextuality.  However, classical analogues of violation of the
algebraic conditions have been reproduced by classical toy theories
\cite{kirk,lsmodel,maroney}, which do not appear to be contextual.  In
light of this, it is worth discussing the additional features of
logical PPS paradoxes that are essential to our proof, but do not
appear in the toy models.

\subsection{The importance of L{\"u}ders-von Neumann updates}

If we allow more general update rules for the intermediate
measurement, then we can obtain similar predictions to a logical PPS
paradox, but with orthogonal pre- and post-selection and without
contextuality.

For example, consider a qubit pre-selected in the state $\ket{0}$ and
post-selected in the state $\ket{1}$.  At an intermediate time, we
make a projective measurement $\{\prj{+},\prj{-}\}$, where $\ket{\pm}
\propto \ket{0} \pm \ket{1}$, in one of two different ways.

In the first method, upon obtaining outcome $\prj{+}$, we apply the
projection postulate as usual, but if the $\prj{-}$ outcome is
obtained then we reset the system to the $\ket{0}$ state.  This is a
valid state-update rule, as it corresponds to the quantum
instrument\footnote{Given a POVM $\{E_j\}$, a quantum instrument is a
  set of CP maps $\{\pvmc_j\}$ such that $\pvmc_j^{\dagger}(I) = E_j$
  and $\sum_j \pvmc_j$ is trace-preserving.  For any such instrument,
  it is possible to measure the POVM in such a way that the state
  update rule is $\rho \rightarrow \pvmc_j(\rho)/\tr(E_j \rho)$.  See
  \cite{heinosaariziman} for details.}
\begin{align}
  \pvmc_{+}(\rho) & = \prj{+} \rho \prj{+} & \pvmc_{-}(\rho) =
  \ket{0}\bra{-} \rho \ket{-}\bra{0}.
\end{align}
Clearly, if the post-selection succeeds, then the outcome of the
intermediate measurement must have been $\prj{+}$, since otherwise the
state of the system prior to post-selection would still be orthogonal
to $\ket{1}$, so we have $\qp
(\prj{+}|\ket{0},\{\pvmc_{+},\pvmc_{-}\},\ket{1}) = 1$.

In the second method, we do the opposite, applying the projection
postulate on obtaining the $\prj{-}$ outcome and resetting the the
system to the $\ket{0}$ state otherwise, which corresponds to the
instrument
\begin{align}
  \pvmc'_{+}(\rho) & = \ket{0}\bra{+} \rho \ket{+}\bra{0} & \pvmc'_{-}(\rho) =
  \prj{-} \rho \prj{-}.
\end{align}
By the same reasoning, we can conclude that $\qp
(\prj{-}|\ket{0},\{\pvmc'_{+},\pvmc'_{-}\},\ket{1}) = 1$.

If we allow ourselves to combine the probabilities for different
intermediate measurements in the same way that we did for logical PPS
paradoxes, setting $f(\prj{+}) = \qp
(\prj{+}|\ket{0},\{\pvmc_{+},\pvmc_{-}\},\ket{1})$ and $f(\prj{-}) =
\qp (\prj{-}|\ket{0},\{\pvmc'_{+},\pvmc'_{-}\},\ket{1})$, then this
would violate the algebraic conditions.  However, this is not a proof
of contextuality as it can be easily accounted for by
measurement-disturbance in a non-contextual ontological model.

Specifically, it occurs in a suitably modified version of Spekkens'
toy theory \cite{toytheory} in which we modify the
measurement-disturbance slightly in order to model the ``resetting''
that occurs for one of the outcomes.

Briefly, the Spekkens' toy bit has four ontic states, which we label
$1$, $2$, $3$, $4$.  The $\ket{0}$ state is modelled by a uniform
distribution over $1$ and $2$, the $\ket{1}$ state by a uniform
distribution over $3$ and $4$, the $\ket{+}$ state by a uniform
distribution over $1$ and $3$, and the $\ket{-}$ state by a uniform
distribution over $2$ and $4$.  The post-selection consists of
checking whether the ontic state is $3$ or $4$ and rejecting if it is
not.  

For a projective measurement of $\{\prj{+},\prj{-}\}$ with
L{\"u}ders-von Neumann update, we output $\prj{+}$ if the ontic state
is $1$ or $3$ and then disturb the system by doing nothing with
probability $1/2$ and swapping $1$ and $3$ with probability $1/2$, and
we output $\prj{-}$ if the ontic state is $2$ or $4$ and then disturb
the system by doing nothing with probability $1/2$ and swapping $2$
and $4$ with probability $1/2$.

For the modified update rule $\{\pvmc_{+},\pvmc_{-}\}$, the only
thing we change is that, upon obtaining the $\prj{-}$ outcome, we swap
$1$ and $2$ instead of $2$ and $4$.  Similarly for
$\{\pvmc'_{+},\pvmc'_{-}\}$, upon obtaining the $\prj{+}$ outcome, we
swap $1$ and $2$ instead of $1$ and $3$.  It is easy to see that this
setup predicts the same probabilities as quantum theory.  Indeed, when
the post-selection succeeds, only the $\prj{+}$ outcome can occur when
the instrument $\{\pvmc_{+},\pvmc_{-}\}$ is used, as this is the only
way the system can end up in $3$ or $4$ if it starts out in $1$ or
$2$, and similarly only the $\prj{-}$ outcome can occur when the
instrument $\{\pvmc'_{+},\pvmc'_{-}\}$ is used.

The model just described is very similar to the toy model for logical
PPS paradoxes introduced in \cite{lsmodel}, with the exception that,
instead of modifying the measurement-disturbance, the model of
\cite{lsmodel} eliminates it for the intermediate measurement outcome
that is supposed to have probability $0$.

What we learn from this is that, in order to imply contextuality, it
is not enough to just have a set of predictions for intermediate
measurements that violate the algebraic conditions.  It is important
that the intermediate measurements have L{\"u}ders-von Neumann update,
because this allows us to infer that the pre- and post-selection must
be nonorthogonal, and that the intermediate measurement cannot make
the post-selection go from being possible to being impossible.  Both
of these were needed for the proof of \cref{mainthm}.  This explains
why, although violations of the algebraic conditions have been found
in various toy models, there are no true logical PPS paradoxes in
noncontextual theories when we attempt to faithfully model
L{\"u}ders-von-Neumann updates.

\subsection{The importance of $0$/$1$ probabilities}

Another important aspect of logical PPS paradoxes is that we demanded
that the probabilities of the intermediate measurement outcomes should
all be $0$ or $1$.  Dropping this requirement can yield probabilities
that violate the algebraic conditions, but nonetheless still have a
noncontextual model.

An example of this is the ``quantum cheshire cat'' \cite{chesire}.  In
this experiment, we have a two qubit system which is pre-selected in
the state
\begin{equation}
  \ket{\psi} = \frac{1}{\sqrt{2}} \left ( \ket{0} + \ket{1} \right )
  \otimes \ket{0},
\end{equation}
and post-selected in the state
\begin{equation}
  \ket{\phi} = \frac{1}{\sqrt{2}} \left ( \ket{0} \otimes \ket{0} +
    \ket{1} \otimes \ket{1} \right ).
\end{equation}

Now, the ABL probabilities satisfy $\qabl{\prj{1} \otimes I} = 0$, but
$\qabl{\prj{1} \otimes \prj{+}} = \qabl{\prj{1} \otimes \prj{-}} =
1/6$.  This is a violation of the algebraic conditions because
$\prj{1} \otimes I = \prj{1} \otimes \prj{+} + \prj{1} \otimes
\prj{-}$, so we should have $f(\prj{1} \otimes I) = f(\prj{1} \otimes
\prj{+}) + f(\prj{1} \otimes \prj{-})$.  However, all of the states
and measurements in this experiment are correctly modelled by
Spekkens' toy theory so, without going into detail, we can conclude
that this does not establish contextuality.  Thus, the condition that
logical PPS paradoxes should involve $0/1$ probabilities is
essential.  A violation of the algebraic conditions on its own is not
enough to establish contextuality.

Interestingly though, the weak measurement version of the quantum
cheshire cat does establish contextuality, because it involves
anomalous weak values.  Since an ABL probability of $0$ implies a weak
value of zero, we have $w(\prj{1} \otimes I|\psi,\phi) = 0$, but we
also find that $w(\prj{1} \otimes \prj{+}|\psi,\phi) = 1/2$ and
$w(\prj{1} \otimes \prj{-}|\psi,\phi) = -1/2$.  These satisfy the
condition $w(\prj{1} \otimes I|\psi,\phi) = w(\prj{1} \otimes
\prj{+}|\psi,\phi) + w(\prj{1} \otimes \prj{-}|\psi,\phi)$, as weak
values always do, but the value $-1/2$ lies outside the eigenvalue
range of the projector $\prj{1} \otimes \prj{-}$, so it is anomalous.
It is rather intriguing that an experiment that can be modelled
non-contextually in its strong measurement version can nonetheless
become contextual when the measurements are weakened.  It would be
interesting to know if violations of the algebraic conditions for
non-$0$/$1$ ABL probabilities always imply anomalous weak values in
this way.

\section{Conclusion}

In conclusion, we outline what \cref{mainthm} tells us about logical
pre- and post-selection paradoxes. We follow the three-pronged
approach of the conclusions in \cite{weakctx}, where broadly similar
techniques were used to show that anomalous weak values are also
proofs of contextuality.

Firstly, the proof enables a classification of possible
interpretations of a logical pre- and post-selection paradox. Suppose
that, despite \cref{mainthm}, we demanded an ontological model for,
say, the three-box paradox. Then at least one of the requirements of
non-contextuality must be violated. It could be the algebraic
conditions, i.e.\ the ball really is in two boxes at once. But it
could instead be the outcome determinism of sharp measurements, i.e.\
there could be no fact about which box the ball is in until the
measurement. Finally it could be that the intermediate measurement
\emph{always} disturbs the post-selection, in violation of the
measurement non-contextuality of the post-selection. Since we view any
form of contextuality as a deficiency in the explanation offered by an
ontological model, we see no particular reason to privilege one of
these possibilities over the others. A sensible option is to reject
the ontological models framework entirely, but without a replacement
it is impossible to say anything rigorous about what lies behind these
paradoxes.

Secondly, the proof suggests that several aspects of these paradoxes
are crucial to preventing a compelling classical explanation, despite
having received little attention thus far. If the intermediate
measurements were not projective, then the pre- and post-selected
states need not overlap, and then there would be no
reason to think that the post-selection could occur in the absence of
a disturbance. There would also be no reason to think that the
intermediate measurements were reading out a pre-defined value. If the
state update rule for the intermediate measurement was something other
than L\"{u}ders-von Neumann rule, or \cref{idpart} was not a feature
of quantum theory, then there would be no reason to think that the
intermediate measurement sometimes has no effect on the
post-selection.

Finally, the proof helps to identify the issues that would have to be
addressed in order to turn a logical pre- and post-selection paradox
into an experimental proposal for demonstrating non-classicality
(i.e.\ a proposal that doesn't render the experiment redundant by
simply assuming all of quantum theory \emph{a priori}). For example,
one would first need an experimental version of the argument from
preparation non-contextuality to outcome determinism for sharp
measurements, for example by using the predictability of the
intermediate measurements on states that overlap with the original
preparation \cite{ravi}. One would need an analysis of how close to
the unrealistic $0$ and $1$ probabilities of the pre- and
post-selected values the experiment would have to come. Finally one
would need an operational method of testing that the post-selection
measurement is the same (to some appropriate level of approximation)
whether preceded by an intermediate projective measurement or the
mixture of channels in \cref{idpart}.  

\printbibliography
\end{document}